\DeclareMathOperator*{\argmax}{arg\,max}
\newtheorem{theorem}{Theorem}[section]
\newtheorem{lemma}[theorem]{Lemma}
\title{\LARGE \bf
ROMA-iQSS: An Objective Alignment Approach via State-Based Value Learning and ROund-Robin Multi-Agent Scheduling 
}
\author{Chi-Hui Lin, Joewie J. Koh, Alessandro Roncone, Lijun Chen
\thanks{The authors are with the Department of Computer Science, University of Colorado, Boulder, Colorado, USA. Emails: {\tt\small \{firstname.lastname\}@colorado.edu}}%
\thanks{This work was supported by Army Research Laboratory under grant \#W911NF-21-2-0126.
Alessandro Roncone is with Lab0, Inc.}
}
\begin{document}

    \maketitle
    \thispagestyle{empty}
    \pagestyle{empty}

    \begin{abstract}  
        Effective multi-agent collaboration is imperative for solving complex, distributed problems. In this context, two key challenges must be addressed: first, autonomously identifying optimal objectives for collective outcomes; second, aligning these objectives among agents. Traditional frameworks, often reliant on centralized learning, struggle with scalability and efficiency in large multi-agent systems. To overcome these issues, we introduce a decentralized state-based value learning algorithm that enables agents to independently discover optimal states. Furthermore, we introduce a novel mechanism for multi-agent interaction, wherein less proficient agents follow and adopt policies from more experienced ones, thereby indirectly guiding their learning process. Our theoretical analysis shows that our approach leads decentralized agents to an optimal collective policy. Empirical experiments further demonstrate that our method outperforms existing decentralized state-based and action-based value learning strategies by effectively identifying and aligning optimal objectives.
    \end{abstract}

    \section{INTRODUCTION}
        Reinforcement Learning (RL) enables autonomous agents to make informed decisions by leveraging past experiences to anticipate future rewards in a variety of contexts.
        However, despite its notable successes in optimizing the behavior of single- or few-agent systems, RL's limitations (e.g.; sample inefficiency, high variance, out-of-distribution behaviors) become more evident when transitioning to multi-agent settings.
        These challenges highlight the critical need for effective inter-agent coordination, a requirement that is particularly important for applications such as
        robotic warehouse management \cite{papoudakis2021benchmarking,mrc}, urban traffic light systems \cite{tsc, Wei_2019}, and human-robot collaboration \cite{brawer2023interactive,tung2024workspaceopt}.

        For optimal coordination, agents face two overarching challenges.
        First, they must identify which objectives will maximize collective utility.
        Second, they must achieve goal alignment with their teammates to avoid counterproductive outcomes.
        Traditionally, the above challenges were addressed by methods that employ a \textsl{centralized} learning approach. 
        Specifically, these methods aggregate experiences from all agents to perform joint strategy optimization, a technique that proved effective in optimizing group outcomes
        \cite{son2019qtran, gupta2017cooperative, lowe2020multiagent, iqbal2019actorattentioncritic}. 
        However, these approaches face three critical challenges: 
        i) they incur exponential computational costs as the number of agents increases,
        ii) they necessitate the global sharing of individual agent policies, which might not be feasible in practice and limits the applicability of such methods, 
        and iii) it is impractical to maintain continuous, high-bandwidth communication with a central controller in dynamic environments, such as self-driving vehicles.
        Collectively, these limitations inevitably limit the scalability and robustness of centralized training paradigms.
        
        \begin{figure}
            \centering
            \includegraphics[width=0.88\columnwidth]{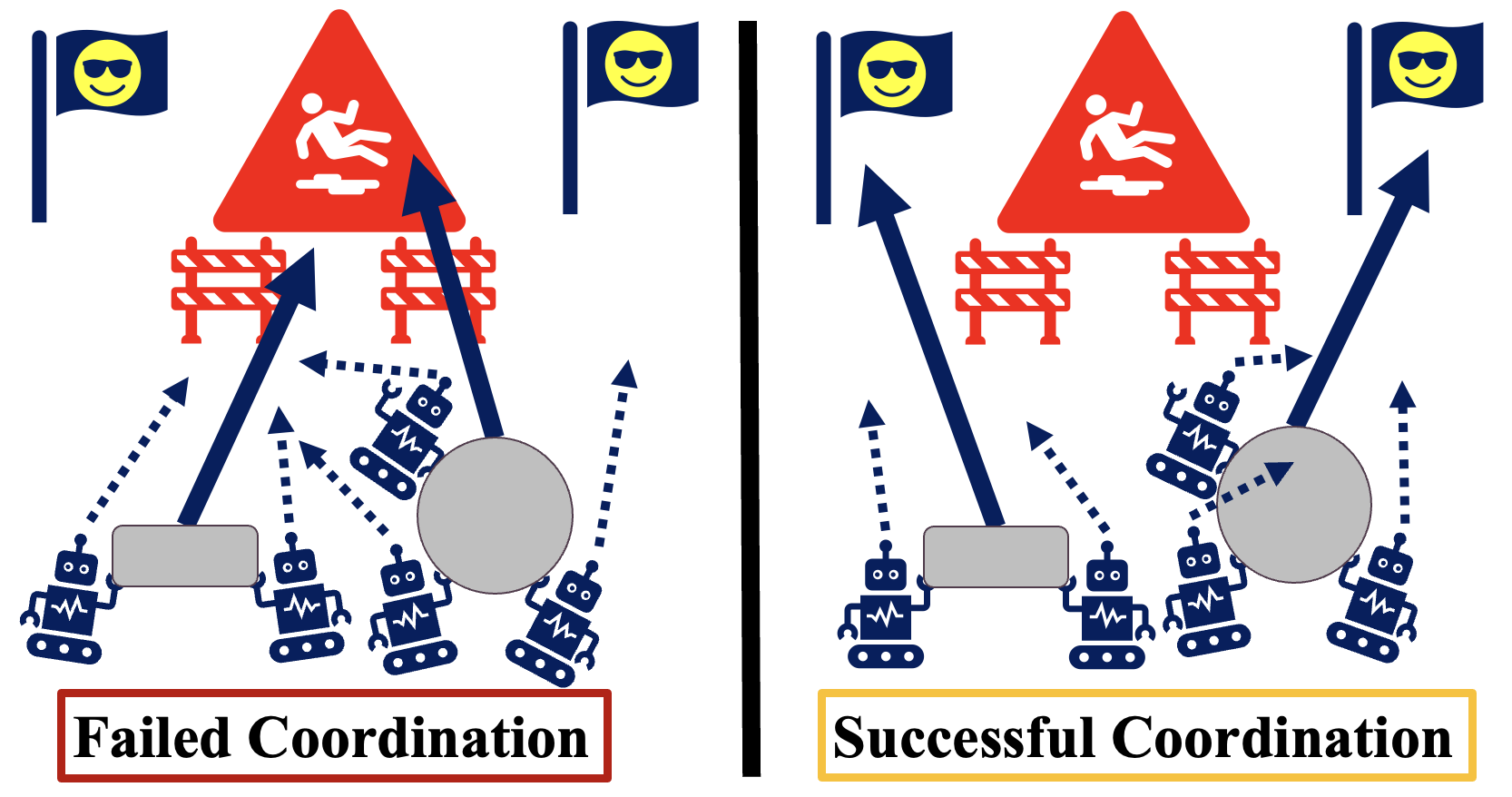}
            
            \caption{Motivating example: five agents are tasked with a collaborative transport problem where a large object needs to be moved to designated locations marked by flags. In the case of failed coordination, the left subgroup (two agents), struggles to identify the optimal objectives, while the right subgroup (three agents), identifies the objectives but fails to align their efforts effectively, leading to an undesired final location for the object. In contrast, successful coordination manifests when all agents not only identify the optimal objectives but also achieve precise alignment in their efforts, culminating in the object reaching its intended location.}
            \label{figurelabel}
        \end{figure}

        Conversely, \textsl{decentralized} learning approaches offer a compelling alternative to centralized training, allowing each agent to estimate expected returns autonomously based on individual experiences. Crucially, these algorithms preclude direct inter-agent communication and restrict access to the policies of other agents \cite{ming1993multi,dewitt2020independent,palmer2018lenient,4399095, I2Q}. As such, decentralized learning paradigms, such as independent Q-learning \cite{tan1997multi}, are particularly well-suited for highly scalable scenarios. Nevertheless, the limited flow of information creates obstacles in agents' understanding of their peers and their ability to interpret the environment effectively. This limitation frequently causes a lack of alignment in agents' objectives or the selection of suboptimal goals, ultimately resulting in poor coordination.

        Human coordination can thrive even without verbal communication, as silent interactions often suffice for mutual understanding. Motivated by this concept, we tackle the hurdles of limited communication within decentralized systems by examining their interaction patterns. Typically, agents operate concurrently, deriving lessons from these experiences---a method we have termed \textsl{Synchronous MultiAgent Interaction (SMA)}. While this strategy promotes efficient exploration, it also results in a continuously changing environment, complicating the agents' capacity to align their strategies. To counteract this issue, we present \textsl{ROMA}, a \textsl{ROund-Robin MultiAgent Scheduling} protocol designed to streamline information acquisition during interaction for goal alignment. Within this framework, interaction is structured in rounds, with each round permitting only one agent to collect experience from the environment. Additionally, more knowledgeable agents utilize their better policies to guide the agent under consideration toward beneficial outcomes, while less experienced agents are afforded the opportunity for exploratory behaviors. This synergistic approach allows agents to capitalize on the expertise of their more experienced peers, culminating in aligned objectives.

        Beyond aligning objectives, equipping agents with the skills to pinpoint the best goals is also crucial. To overcome the constraints of decentralized algorithms, we propose \textsl{independent QSS learning} (\textsl{iQSS}), an innovative approach to decentralized learning based on state values. iQSS works by assessing the value of the current global state based on potential future states, utilizing insights gained from environmental interactions. As a result, agents can more accurately anticipate the returns of possible future scenarios, guiding them more efficiently towards the optimal objectives. 

        In conclusion, this paper introduces a tightly integrated framework that combines independent state-based value learning, iQSS, with a specialized multi-agent interaction protocol, ROMA. These two components are not stand-alone solutions; rather, they operate synergistically to enable agents to pinpoint optimal states and synchronizing their objectives. Theoretical analysis reveals iQSS helps agents converge on effective policies for optimal states, while ROMA coordinates their efforts to a common goal.
        Our empirical studies, featuring multi-stage coordination tasks, demonstrate ROMA-iQSS's superiority over the current state-of-the-art, I2Q \cite{I2Q}, and traditional independent Q learning methods. Crucially, ROMA-iQSS stands out for its dual capability to identify optimal objectives and ensure goal alignment among agents.

    \section{Preliminaries}
        \subsection{Coordination Problem}
            We aim to solve the following collaborative problem, represented by a 7--tuple Markov Decision Process,  $\{ K, S,(A_k)_{k\in K}, T,r,(\pi_k)_{k\in K}, \gamma \}$:
    
            \begin{itemize}
                \item $\mathcal{K} = \{ 1, \dots, k \}$ is the set of indexes of all agents.
                \item $S$ is the environment state space.
                \item $A_k$ is the discrete action space for agent $k$, while $A_{-k}$ denotes the action space for all agents except agent $k$.
                \item $s_{(t+1)} = T(s_{(t)}, a_{(t)})$ is the environment transition function. To clarify, time step indices are indicated with parentheses, such as $a_{(t)}$ for joint action at time $t$, and agent-ID indices are shown without parentheses, like $a_{k}$ for agent $k$’s action.
                \item $r$ is the shared reward received by all agents $k \in K$.
                \item $\pi_k$, belongs to a deterministic policy class $\Pi_k$, generates actions $a_k\in A_k$ for agent $k$ under environment states.
                \item $\gamma$ is a discount factor.
            \end{itemize}
            Our main goal is to develop a learning method to enable each agent to independently learn its optimal policy, $\pi_k$, using its own experiences, without the explicit knowledge of policies shared by other agents. More specifically, the joint of all individual optimal policies should be an optimal joint policy $\pi^*$ that maximizes the cumulative discounted shared rewards $r$ with a discount factor denoted as $\gamma$. These rewards depends on the current state, $s_{(t)}$, and the next state, $s_{(t+1)}$, arising from a deterministic transition function, $T$. This function uses the current state and the action derived from the policy, $\pi(s_{(t)})$, to generate the next state.
            \begin{align*}\label{obj}
                \pi^*  &= \argmax_{\pi \in \Pi} \mathbb{E}[\sum_{s_{(0)} \in S}\sum_{t\geq 0}\gamma^t r(s_{(t)},s_{(t+1)})], \\
                &s_{(t+1)} = T(s_{(t)},\pi(s_{(t)})), \\       
                &\Pi       =\Pi_1\times\Pi_2\times...\times\Pi_{|K|}
            \end{align*} 
            Additionally, agents collect their own experiences when interacting with other agents in the environment. During an agent's interaction, it receives multiple pieces of information, each represented by a tuple $(s, a_k, s', r)$. Each tuple contains the current environment state $s$, an agent's own action $a_k$, a subsequent environment state $s'$, and a shared reward $r$.
            
        \subsection{Overview of Value Learning Methods}
            Centralized Q learning provides a method to generate an optimal policy that maximizes a group's profits, but it suffers from scalability. On the other hand, independent Q learning enables agents to learn independently. However, it is challenging for it to learn an optimal policy due to its partial understanding of the environment.
    
            \subsubsection{Centralized Q Learning (cenQ)}
                In the context of our coordination problem,  the utilization of centralized Q-learning necessitates the presence of a central controller.  The central controller accesses the experience of all agents and leverages it to learn the joint strategy for all agents.
 
                Specifically, cenQ maintains a value function, representing an expected long-term return associated with the environment state and joint action, $a$, denoted as $Q(s, a)$, where $a=(a_1, a_2, ..., a _{|K|})$ and $a \in A$, where $ A=A_1 \times A_2 \times ... \times A_{|K|}$. This value function is learned through a temporal difference learning process, incorporating $\alpha$ as the learning rate, $\gamma$ as the discount factor, and $s'$ as the subsequent environment state, determined by the environment transition function $T(s, a)$.
                \begin{equation}\label{qSA}
                    \begin{aligned}
                        Q(s, a) \leftarrow
                                &(1-\alpha)Q(s,a) \\
                                &+\alpha(r + \gamma \max_{a'\in A} Q(s', a'))
                    \end{aligned}
                \end{equation}
                              
                Temporal difference learning in \cref{qSA} is well-known for yielding the optimal Q-value, $Q^*(s, a)$, upon convergence \cite{watkins1992q}. 
                \begin{align*}
                    Q^*(s, a) = r(s,T(s,a))+\gamma \max_{a'\in A}Q^*(s', a')
                \end{align*}
            
                With the value function $Q^*(s, a)$, The centralized controller can further induce the optimal policy, $\pi^*$:
                \begin{align*}
                    \pi^*(s) = \argmax_{a \in A}Q^*(s, a) 
                \end{align*}

                However, even with the promise of optimal convergence, the preference for cenQ diminishes when applied to extensive multi-agent systems due to inherent scalability concerns.
            \subsubsection{Independent Q Learning (indQ)}\label{sec:IIndependent-Q-Learning}
                Independent Q Learning becomes a preferable option in multi-agent systems due to its superior scalability. This permits an agent to estimate its value based on individual experiences. In technical terms, each agent, $k \in K$, maintains the value $Q_k$, representing the expected return associated with its own action and environment state, through temporal difference learning:
                \begin{equation}\label{TD_k}
                    \begin{aligned}
                        Q_k(s, a_k) \leftarrow
                                &(1-\alpha)Q_k(s,a_k) \\
                                &+\alpha(r + \gamma \max_{a_k'\in A_k} Q_k(s', a_k'))
                    \end{aligned}
                \end{equation}

                In \cref{TD_k}, the subsequent state $s'$ is determined by the transition function, $T(s, a)$. However, an independent agent $k$ is limited to accessing the information of its own action denoted as $a_k$, instead of the complete action set $a$. It can only access its action information, $a_k$. Therefore, the estimated value, $Q_k(s,a_k)$, depends on its observed transition, $P(s'|s,a_k)$, which is a distribution conditioned on its action and environment state.
                \begin{align*}
                    &Q_k^*(s, a_k) = \mathbb{E}_{s' \sim P(s'|s,a_k)}[r(s,s')+\gamma \max_{a_k'\in A_k}Q_k^*(s', a_k')]
                \end{align*}
                The observed transition, $P(s'|s,a_k)$, in an environment with deterministic transition, $T(s,a_k,a_{-k})$ can be represented as: 
                \begin{align*}
                    P(s'|s,a_k) 
                        & = \sum_{a_{-k} \in A_{-k}} P(s'|s,a_k,a_{-k})P(a_{-k}|s,a_k) \\
                        & = \sum_{a_{-k} \in A_{-k}} \mathds{1}_{s'=T(s,a_k,a_{-k})}P(a_{-k}|s,a_k) 
                \end{align*}
                \begin{align*}
                    \mathds{1}_{s'=T(s,a_k,a_{-k})} = 
                        \begin{cases}
                            1 &, if\ s'=T(s,a_k,a_{-k}) \\
                            0 &, otherwise
                        \end{cases}
                \end{align*}
                 Therefore, $P(s'|s,a_k)$ is dependent on the responses, $P(a_{-k}|s,a_k)$, from other agents. If they keep changing their response strategy, the individual observed transition can be non-stationary, making the convergence on the value function, $Q_k(s,a_k)$, challenging. 

                Furthermore, if other agents maintain stable but suboptimal strategies, the agent could learn a suboptimal value. This, in turn, results in its derived policy, $\pi^*_k$, also being suboptimal.
                \begin{align*}
                    \pi^*_k(s) = \argmax_{a_k \in A_k}Q^*_k(s, a_k) 
                \end{align*}
                
        \subsubsection{Centralized QSS Learning (cQSS)}
            QSS learning distinguishes itself as a state-based value learning. Unlike action-based methods, its return estimate is not directly dependent on the actions. Instead, it is determined by only a state and a subsequent state.
            
            When designing a centralized QSS learning, like centralized Q learning, it also necessitates including a central controller that can access the information of all agents. However, instead of estimating the state-action value, it approximates the state-state value, $Q(s, s')$, an expected return associated with a state $s \in S$, and a neighboring state $s' \in N(s)$, where $N(s)$ is a set of possible subsequent states, observed by agents during the interaction.
            \begin{align*}
                N(s) &=  \{s'|P(s'|s) > 0\}\\
                    &= \{s'|\mathds{1}_{s'=T(s,a)}>0, \forall a\in A\} 
            \end{align*}

            Through the temporal difference learning, the central controller learns the converged state-state value $Q^*(s, s')$ \cite{QSS}. 
            \begin{align*}
                Q(s, s') \leftarrow
                        &(1-\alpha)Q(s,s) \\
                        &+\alpha(r + \gamma \max_{s''\in N(s')} Q(s', s''))
            \end{align*}
            \begin{align*}
                Q^*(s, s') = r(s,s')+\gamma \max_{s''\in N(s')}Q^*(s', s'')         
            \end{align*}
            Edwards et al. show the equivalence between $Q^*(s, a)$ and $Q^*(s, s')$, where the subsequent state $s'$ is determined by the transition $T(s, a)$ \cite{QSS}:
            \begin{align*}
                Q^{ss*}(s, a) = Q^*(s,T(s,a))
            \end{align*}
            We can utilize the equivalence to let the central controller induce its optimal state-action value $Q^{ss*}(s, a)$ and further induce the policy $\pi_{ss}^*$, which is guaranteed to be optimal.
            \begin{align*}
                \pi^{ss*}(s) = \argmax_{a \in A}Q^{ss*}(s, a) 
            \end{align*}
    
    \section{Methods}
    \subsection{Independent QSS Value Learning (iQSS)}
        Although centralized QSS learning can learn the optimal policy, similar to other centralized approaches, it is also not a scalable choice. Therefore, we develop an independent state-based value learning method, independent QSS learning.

        While centralized QSS learning enables a central controller to learn the state-state value, independent QSS learning enables each agent to learn its own value. Each agent learns the value through the temporal difference learning, used by the centralized QSS controller. Due to the convergence property \cite{watkins1992q, QSS}, each agent can ultimately learn the unique optimal state-state value $Q^*(s,s')$. Nevertheless, since agents learn the value independently, we represent their values, respectively, with an index $k$.
        \begin{equation}\label{TD-ind-QSS}
            \begin{aligned}
                Q_k(s, s) \leftarrow
                        &(1-\alpha)Q_k(s,s) \\
                        &+\alpha(r + \gamma \max_{s''\in N(s')} Q_k(s', s'')).
            \end{aligned}
        \end{equation}

        \begin{equation}\label{optQssk}
            \begin{aligned}
                Q^*(s, s')  &= Q_k^*(s, s') \\
                            &= r(s,s')+\gamma \max_{s''\in N(s')}Q_k^*(s', s'')\\        
            \end{aligned}
        \end{equation}

        Like cQSS inducing its action-based value, iQSS agents can also utilize their states-based value to induce their action-based value function. However, unlike the centralized approach, an independent agent cannot access the information of actions, made by others. Therefore, an iQSS agent maintains a value, $Q_k^{ss*}(s, a_k)$ that takes only its own action and the environment state. It induces its state-action value $Q_k^{ss*}(s, a_k)$ by maximizing the state-state value over a set of subsequent states $N(s,a_k)$, which is observed and recorded during its experience collection stage. 

        \begin{equation}\label{QSAk-from-QSSk}
            \begin{aligned}
                Q_k^{ssa*}(s, a_k) =
                        \max_{s' \in N(s,a_k)}Q_k^*(s,s')
            \end{aligned}
        \end{equation}
        
        \begin{equation} \label{N_s_ak}
            \begin{aligned}
                N(s,a_k) &=  \{s'|P(s'|s,a_k) > 0\}
            \end{aligned}
        \end{equation}

        With the converged value, iQSS agents can then induce their policies:
        \begin{align*}
            \pi^{ss*}_k(s) = \argmax_{a_k \in A_k}Q^{ssa*}_k(s, a_k) 
        \end{align*}

        Although the individual policies $\pi^{ss*}_k(s)$ were induced without considering the action information of other agents, we show that the joint policies still have the equivalence to the optimal policy $\pi^{*}(s)$ under the assumption in Theorem \ref{pi_ss_k is optimal}. We later show the assumption is true in Lemma \ref{the set equivalence}.

        \begin{theorem}\label{pi_ss_k is optimal}
            $\pi^{ss*}_k$ is optimal under the set equivalence assumption:
            \begin{align*}
                & \{ s'|\ \exists a_{-k}\in A_{-k}, \mathds{1}_{s'=T(s,a_k,a_{-k})}>0 \} \\
                & = \{ s'| P(s'|s,a_k)>0 \}
            \end{align*}
        \end{theorem}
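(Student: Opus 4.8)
The plan is to reduce the independently‑induced objects $Q^{ssa*}_k$ and $\pi^{ss*}_k$ to their centralized‑QSS counterparts, whose optimality is already granted by \cite{QSS}, so that the only real work is translating between the ``observed successor set'' $N(s,a_k)$ and the ``model-based'' successor set $\{\,T(s,a_k,a_{-k})\mid a_{-k}\in A_{-k}\,\}$. First I would unfold $N(s,a_k)=\{s'\mid P(s'|s,a_k)>0\}$ from \cref{N_s_ak} using the decomposition of $P(s'|s,a_k)$ through $P(a_{-k}|s,a_k)$ recorded in \cref{sec:IIndependent-Q-Learning}; the hypothesis of the theorem is precisely the statement that this observed set coincides with $\{\,T(s,a_k,a_{-k})\mid a_{-k}\in A_{-k}\,\}$, i.e., that every joint completion of $a_k$ is actually sampled during data collection so nothing is missing from $N(s,a_k)$. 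Substituting this into \cref{QSAk-from-QSSk} and using \cref{optQssk} (each agent's converged state-state value equals the unique $Q^*$) then gives
\begin{align*}
 Q^{ssa*}_k(s,a_k)=\max_{a_{-k}\in A_{-k}}Q^*\!\big(s,T(s,a_k,a_{-k})\big).
\end{align*}

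Next I would apply the Edwards et al.\ equivalence $Q^{ss*}(s,a)=Q^*\big(s,T(s,a)\big)$ with $a=(a_k,a_{-k})$ to rewrite the right-hand side as $\max_{a_{-k}}Q^{ss*}\big(s,(a_k,a_{-k})\big)$, and then take the $\argmax$ over $a_k$:
\begin{align*}
 \pi^{ss*}_k(s)\in\argmax_{a_k\in A_k}\ \max_{a_{-k}\in A_{-k}}Q^{ss*}\big(s,(a_k,a_{-k})\big).
\end{align*}
A short computation shows this set is exactly the set of $k$-th components of the joint maximizers of $Q^{ss*}(s,\cdot)$. Since $\argmax_{a\in A}Q^{ss*}(s,a)=\pi^{ss*}(s)$ is the optimal joint policy from the centralized analysis, it follows that $\pi^{ss*}_k(s)$ always lies in an optimal joint action and that the value it certifies, $Q^{ssa*}_k(s,\pi^{ss*}_k(s))$, equals the optimal state value.

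The step I expect to be the main obstacle is this last one: when the joint maximizer of $Q^{ss*}(s,\cdot)$ is not unique, agents independently picking components that are each optimal for \emph{some} completion need not assemble into a globally optimal joint action --- the standard equilibrium-selection gap in decentralized control. I would close it either by assuming a unique optimal joint action (or a shared deterministic tie-breaking order on $A$ consistent with the deterministic classes $\Pi_k$), so that the componentwise $\argmax$es necessarily agree, or by reading the theorem as the claim that each agent has learned the correct value and a policy supported on the optimal joint action set, leaving the actual cross-agent selection to be coordinated by ROMA. In either case the argument is only conditional until the set-equivalence hypothesis is itself discharged, which is the job of \cref{the set equivalence}, so I would keep the proof modular: establish the chain of identities above abstractly, then plug that lemma in to remove the assumption.
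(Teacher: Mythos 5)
Your proposal takes essentially the same route as the paper's proof: unfold $N(s,a_k)$ via the set-equivalence hypothesis to obtain $Q^{ssa*}_k(s,a_k)=\max_{a_{-k}\in A_{-k}}Q^{*}\bigl(s,T(s,a_k,a_{-k})\bigr)$, then take the $\argmax$ over $a_k$ to conclude that the selected action attains $\max_{a\in A}Q^{*}\bigl(s,T(s,a)\bigr)$ (your detour through the Edwards et al.\ identity $Q^{ss*}(s,a)=Q^{*}(s,T(s,a))$ is cosmetic). The equilibrium-selection gap you flag is real but does not contradict the paper: its proof establishes exactly this componentwise notion of individual optimality, and the cross-agent selection among multiple optimal joint actions is deferred, as in your second reading, to the ROMA scheme of Theorem~\ref{ROMA-Optimal}.
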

        \begin{proof}
            \begin{align*}
                & Q^{ssa*}_k( s, a_k) \\
                by\ \cref{QSAk-from-QSSk} \ \ & = \max_{s'\in N(s,a_k)}Q^{*}_k( s, s') \\
                by\ \cref{optQssk} \ \ & = \max_{s'\in N(s,a_k)}Q^{*}( s, s') \\
                by\ \cref{N_s_ak} \ \ & =  \max_{s'\in \{ s'| P(s'|s,a_k)>0 \}}Q^{*}( s, s') \\
                = & \max_{s'\in \{ s'|\ \exists a_{-k}\in A_{-k}, \mathds{1}_{s'=T(s,a_k,a_{-k})}>0 \}}Q^{*}( s, s') \\
                = & \max_{a^{-k} \in A^{-k} }Q^{*}( s, T(s,a_k,a_{-k}))
            \end{align*}  
            The first three equivalence come from the definitions. The fourth is due to the assumption, mentioned in the Theorem \ref{pi_ss_k is optimal}. For the fifth equivalence, the upper-hand side maximizes state-state value over all subsequent states, which any joint actions containing $a_k$ could, given the current state $s$. The lower-hand side basically does the same thing, but it brings the transition function $T(s,a_k, a_{-k})$ into the state-state value function. Therefore, instead of maximizing over states, it maximizes over all possible joint actions while fixing the action made by agent $k$ to be $a_k$.
            
            Let $a_k^*=\pi^{ss*}_k(s)$.
            \begin{align*}
                & Q^{ssa*}_k( s, a_k^*) \\
                & =  \max_{a^{-k} \in A^{-k} }Q^{*}( s, T(s,a_k^*,a_{-k})) \\
                & =  \max_{a_k\in A_k}\max_{a^{-k} \in A^{-k} }Q^{*}( s, T(s,a_k,a_{-k})) \\
                & =  \max_{(a_k,a_{-k})\in A_k\times A_{-k}}Q^{*}( s, T(s,a_k,a_{-k})) \\
                & \geq \ Q^{*}( s, T(s,a)), \forall a \in A
            \end{align*}
            The first equivalence comes from the final equivalence of the process, we have shown. The second equivalence is from the $\pi^{ss*}_k(s)$ definition. For the third equivalence, the upper-hand side first maximizes the value over joint actions containing $a_k$ for each $a_k\in A_k$ and then maximizes it over $a_k\in A_k$. It considers all the Cartesian product's joint actions, $A_k\times A_{-k}$. The lower-hand side of the fourth equivalence maximizes over the Cartesian product of the action sets. Therefore, the fourth equivalence holds. The final equivalence is because of the equivalence $A= A_k \times A_{-k}$.

            The inequality shows the action generated by the $\pi^{ss*}_k(s)$ is the optimal action when the current state is $s$. The property holds for all states $s \in S$. Therefore, $\pi^{ss*}_k(s)$ is optimal.  
        \end{proof}        
        
        Since Theorem \ref{pi_ss_k is optimal} requires the assumption, we show it is true under a condition in the Lemma \ref{the set equivalence}. In addition, we later create multi-agent interaction schemes to let the condition hold.
        
        \begin{lemma} \label{the set equivalence}
            The set equivalence assumption, mentioned in the Theorem \ref{pi_ss_k is optimal}, holds true under the relation assumption:
            \begin{align*}
                & \exists a_{-k}\in A_{-k}, \mathds{1}_{s'=T(s,a_k,a_{-k})}>0 \Rightarrow P(s'|s,a_k) > 0
            \end{align*}
        \end{lemma}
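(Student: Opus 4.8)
The plan is to prove the claimed set equality by a straightforward double inclusion. Write $L(s,a_k) = \{\, s' \mid \exists\, a_{-k}\in A_{-k},\ \mathds{1}_{s'=T(s,a_k,a_{-k})}>0 \,\}$ for the set of next states physically reachable by some joint action extending $a_k$, and $R(s,a_k) = \{\, s' \mid P(s'|s,a_k)>0 \,\}$ for the support of the agent's observed transition. The lemma is the statement $L(s,a_k)=R(s,a_k)$ for every $s$ and $a_k$, and I would establish $L\subseteq R$ and $R\subseteq L$ in turn.

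The inclusion $L(s,a_k)\subseteq R(s,a_k)$ is immediate: it is exactly the relation assumption hypothesized in the lemma, read pointwise in $s'$. For the reverse inclusion $R(s,a_k)\subseteq L(s,a_k)$, I would invoke the decomposition of the observed transition already derived in the preliminaries,
\[
    P(s'|s,a_k) = \sum_{a_{-k}\in A_{-k}} \mathds{1}_{s'=T(s,a_k,a_{-k})}\, P(a_{-k}|s,a_k).
\]
Every summand is nonnegative, so $P(s'|s,a_k)>0$ forces the existence of at least one index $a_{-k}$ with $\mathds{1}_{s'=T(s,a_k,a_{-k})}\,P(a_{-k}|s,a_k)>0$; in particular $\mathds{1}_{s'=T(s,a_k,a_{-k})}>0$, hence $s'\in L(s,a_k)$. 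Combining the two inclusions yields $L(s,a_k)=R(s,a_k)$, and since $s\in S$ and $a_k\in A_k$ were arbitrary, the set equivalence assumption required by \cref{pi_ss_k is optimal} holds.

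I do not expect a genuine obstacle inside the lemma itself: both directions are one-line support arguments over a finite action set, with the reverse direction being essentially automatic from the transition decomposition and the forward direction being the stated hypothesis. The substantive point — which I would flag here rather than prove — is that the relation assumption is \emph{not} automatic. It asserts that every next state reachable by some $a_{-k}$ actually receives positive probability under the agent's empirical transition, i.e., the other agents' response distribution $P(a_{-k}|s,a_k)$ puts positive mass on the relevant joint completions during experience collection. Securing that condition is exactly the role of the ROMA scheduling protocol, so the real work is deferred to the design of that interaction scheme rather than to this lemma.
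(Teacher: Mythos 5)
Your proposal is correct and follows essentially the same route as the paper: a double inclusion where $Y\subseteq X$ is exactly the hypothesized relation assumption, and $X\subseteq Y$ follows from the total-probability decomposition $P(s'|s,a_k)=\sum_{a_{-k}}\mathds{1}_{s'=T(s,a_k,a_{-k})}P(a_{-k}|s,a_k)$ together with non-negativity of the summands (the paper phrases this step via the bound $P(a_{-k}|s,a_k)\leq 1$, you via a positive-summand argument, which is the same support observation). Your closing remark that the relation assumption itself must be secured by the interaction scheme matches the paper's Lemma~\ref{equality-prob} and the subsequent $\epsilon$-greedy/ROMA discussion.
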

        \begin{proof}
            We show the equivalence by showing they are subsets of each other. For simplicity of clarification, we represent two sets by $X$ and $Y$. We first prove the set $X$ is a subset of the set $Y$.
            \begin{align*}
                & X = \{ s'| P(s'|s,a_k)>0 \} \\
                & = \{ s'| \sum_{a_{-k} \in A_{-k}} P(s'|s,a_k,a_{-k})P(a_{-k}|s,a_k)>0 \}\\
                & = \{ s'| \sum_{a_{-k} \in A_{-k}} \mathds{1}_{s'=T(s,a_k,a_{-k})}P(a_{-k}|s,a_k)>0 \}\\
                & \subseteq \{ s'| \sum_{a_{-k} \in A_{-k}} \mathds{1}_{s'=T(s,a_k,a_{-k})}>0 \} \\
                & = \{ s'|\ \exists a_{-k}\in A_{-k}, \mathds{1}_{s'=T(s,a_k,a_{-k})}>0 \} = Y
            \end{align*}
            By the law of the total probability, we have the first equivalence. Then, since the transition, $P(s'|s,a_k, a_{-k})$, is deterministic in our environments and represented by $s'=T(s'|s,a_k,a_{-k})$, we have the second equivalence. In addition, we have the subset relation between the third and the fourth line because of the inequality for any $s' \in S$:
            \begin{align*}
                &\sum_{a_{-k} \in A_{-k}} \mathds{1}_{s'=T(s,a_k,a_{-k})}P(a_{-k}|s,a_k) \leq \\
                &\sum_{a_{-k} \in A_{-k}} \mathds{1}_{s'=T(s,a_k,a_{-k})}
            \end{align*}
            The inequality is true because $0\leq P(a_{-k}|s,a_k)\leq 1$. Finally, the final equivalence comes from the relation: 
            \begin{align*}
                &\sum_{a_{-k} \in A_{-k}} \mathds{1}_{s'=T(s,a_k,a_{-k})}>0 \Leftrightarrow \\ &\exists a_{-k}\in A_{-k}, \mathds{1}_{s'=T(s,a_k,a_{-k})}>0
            \end{align*}
            It is true because $\mathds{1}$ is non-negative.

            Moreover, we can prove the set $Y$ is a subset of the set $X$ by the assumption, mentioned in the theorem.
            \begin{align*}
                & Y = \{ s'|\ \exists a_{-k}\in A_{-k}, \mathds{1}_{s'=T(s,a_k,a_{-k})}>0 \} \\
                & \subseteq \{ s'| P(s'|s,a_k)>0 \} = X
            \end{align*}
            Since they are subsets of each, we have the set equivalence in Lemma \ref{the set equivalence}.
        \end{proof}
        \begin{lemma}\label{equality-prob}
            The relation assumption, mentioned in Lemma \ref{the set equivalence}, holds true if $P(a_{-k}|s,a_k)$ is always positive.
        \end{lemma}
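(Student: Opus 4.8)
The plan is to verify the implication directly by unpacking $P(s'|s,a_k)$ through the law of total probability, exactly as was done in the proof of Lemma~\ref{the set equivalence}. Concretely, I would fix an arbitrary state $s$, an action $a_k \in A_k$, and a candidate successor state $s'$, and assume the hypothesis of the implication: there exists some $a_{-k}^{\circ} \in A_{-k}$ with $\mathds{1}_{s'=T(s,a_k,a_{-k}^{\circ})} > 0$. Since the indicator is $\{0,1\}$-valued, this means $s' = T(s,a_k,a_{-k}^{\circ})$ and the indicator equals exactly $1$.

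Next I would expand $P(s'|s,a_k) = \sum_{a_{-k}\in A_{-k}} \mathds{1}_{s'=T(s,a_k,a_{-k})}\, P(a_{-k}|s,a_k)$, using the deterministic-transition rewriting of $P(s'|s,a_k,a_{-k})$ already established in the preliminaries. Every summand is a product of a non-negative indicator with $P(a_{-k}|s,a_k)$, which is strictly positive by the hypothesis of this lemma; hence every summand is non-negative, and the sum is bounded below by the single term indexed by $a_{-k}^{\circ}$, namely $1 \cdot P(a_{-k}^{\circ}|s,a_k) > 0$. Therefore $P(s'|s,a_k) > 0$, which is precisely the conclusion of the relation assumption. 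Since $s$, $a_k$, and $s'$ were arbitrary, the relation assumption of Lemma~\ref{the set equivalence} holds.

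I do not anticipate a genuine obstacle here: the argument reduces to a one-line lower bound on a sum of non-negative terms. The only care needed is bookkeeping — making explicit that $\mathds{1}_{s'=T(s,a_k,a_{-k})} > 0$ is equivalent to the indicator being exactly $1$, and invoking the deterministic-transition identity so that the expansion of $P(s'|s,a_k)$ matches the form used earlier in the excerpt. One could also remark that the converse direction (the set $X \subseteq Y$ inclusion) is unconditional, so this lemma supplies the one extra ingredient — positivity of $P(a_{-k}|s,a_k)$ — that upgrades that inclusion to the full set equivalence required by Theorem~\ref{pi_ss_k is optimal}.
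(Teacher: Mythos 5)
Your proof is correct and follows essentially the same route as the paper's: expand $P(s'|s,a_k)$ via the law of total probability with the deterministic-transition indicator, then lower-bound the sum by the positive term supplied by the hypothesized $a_{-k}^{\circ}$ together with the positivity of $P(a_{-k}|s,a_k)$. Your version is simply a bit more explicit about isolating that single summand; no gap.
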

        \begin{proof}
            From the law of total probability, we can infer the equalities:
            \begin{align*}
                P(s'|s,a_k) &= \sum_{a_{-k}\in A_{-k}}P(s'|s,a_k,a_{-k})P(a_{-k}|s,a_k) \\
                            &= \sum_{a_{-k}\in A_{-k}} \mathds{1}_{s'=T(s,a_k,a_{-k})}P(a_{-k}|s,a_k)
            \end{align*}
            Additionally, we know that $\mathds{1}_{s'=T(s,a_k, a_{-k})}$ is always non-negative and we assume that $P(a_{-k}|s,a_k)$ is always positive, Therefore, the relation is true:
            \begin{align*}
                \exists a_{-k}\in A_{-k}, \mathds{1}_{s'=T(s,a_k,a_{-k})}>0 \Rightarrow P(s'|s,a_k) > 0
            \end{align*}
        \end{proof}        
        Based on the given Theorem and Lemmas, we deduce that the strategy $\pi^{ss*}_k$ is deemed optimal when the conditional probability $P(a^{-k}|s,a_k)$ is consistently greater than zero. This assumption essentially means that for any chosen action $a_k$ by agent $k$, there is always a nonzero probability of encountering any other actions $a{-k}$ from other agents. Put simply, regardless of the decision made by agent $k$, there's always a possibility to interact with the varied actions of other agents. To meet this requirement, one can implement an $\epsilon$-greedy strategy with $\epsilon>0$, ensuring that agents occasionally choose actions at random, thereby maintaining a positive probability of diverse actions.
    \subsection{Multiagent Interaction Schemes}
        
        Agents require experience for learning. They collect experience when interacting with other agents. In an independent learning setting, each agent collects its own experience and does not share it with others. Moreover, independent QSS learning agents require an assumption to ensure the individual learned policy $\pi^{ss*}_k$ to be optimal. Therefore, we design our interaction schemes to ensure it.

         \subsubsection{Synchronous Multiagent Interaction (SMA)}\label{sec:IIndependent-Q-Learning}
            From Theorem \ref{pi_ss_k is optimal}, Lemma \ref{the set equivalence} and \ref{equality-prob}, we know that $\pi^{ss*}_k$ is optimal if $P(a_{-k}|s,a_k)$ is always positive. That means $\pi^{ss*}_k$ is optimal if the probability of all possible joint actions should stay positive no matter what action a certain agent executes. Therefore, to ensure $P(a_{-k}|s,a_k)>0$, we need to ensure agents always take a random action with probability $>0$ when interacting with others. More specifically, we can enable agents to execute $\epsilon$-greedy policy, which takes random action with probability $\epsilon>0$, and learned action with probability $1-\epsilon$:
            \begin{align*}
                & \pi^{\epsilon}_k(s) = &\begin{cases}
                                            \pi_k(s) \ &, \textrm{with probability } 1-\epsilon \\
                                            \textrm{a random action} &, \textrm{with probability } \epsilon
                                        \end{cases}
            \end{align*} 

            In synchronous multiagent interaction (SMA), each agent executes $\epsilon$-greedy policy with $\epsilon>0$ and collects experiences to its replay buffers. Then, from its collected experience, agent $k \in K$ would observe $P(a_{-k}|s,a_k)>0$. Therefore, with Theorem \ref{pi_ss_k is optimal}, we know each agent $k \in K$ would learn an individual policy, $\pi^{ss*}_k$, which is optimal.
            
            \begin{algorithm}\label{SMA Interaction}
            	\caption{Sychrnounous Multiagent Interaction} 
            	\begin{algorithmic}[1]
            		\For {$t=1,2,\ldots,t_{max}$}
                        \State All agents execute their $\epsilon$-greedy policy.
                        \State All agents record the interaction to the replay buffers.
            		\EndFor
            	\end{algorithmic} 
            \end{algorithm}

        \subsubsection{Round-Robin Multiagent Interaction (ROMA)}
            When all agents follow the SMA interaction process, each agent can learn an individual optimal policy through independent QSS learning. However, they might not form an optimal joint policy if there exists more than one optimal joint policy in the environment. More specifically, if agents learn their individual optimal policy aiming at reaching a divergent optimal joint policy, the joint of their policies usually ends up being policies, that are not the optimal policy, expected by them. Take a robot coordination task as an example; all robots acknowledge that the optimal solution is that they all turn right or left. Therefore, each of them will learn that the optimal individual action could be either turning left or right. A potential joint of its optimal individual policies would be that some turn left, and some turn right. That is apparently not the result generated by an optimal joint policy. 

            \begin{figure}
                \centering
                \includegraphics[width=0.45\columnwidth]{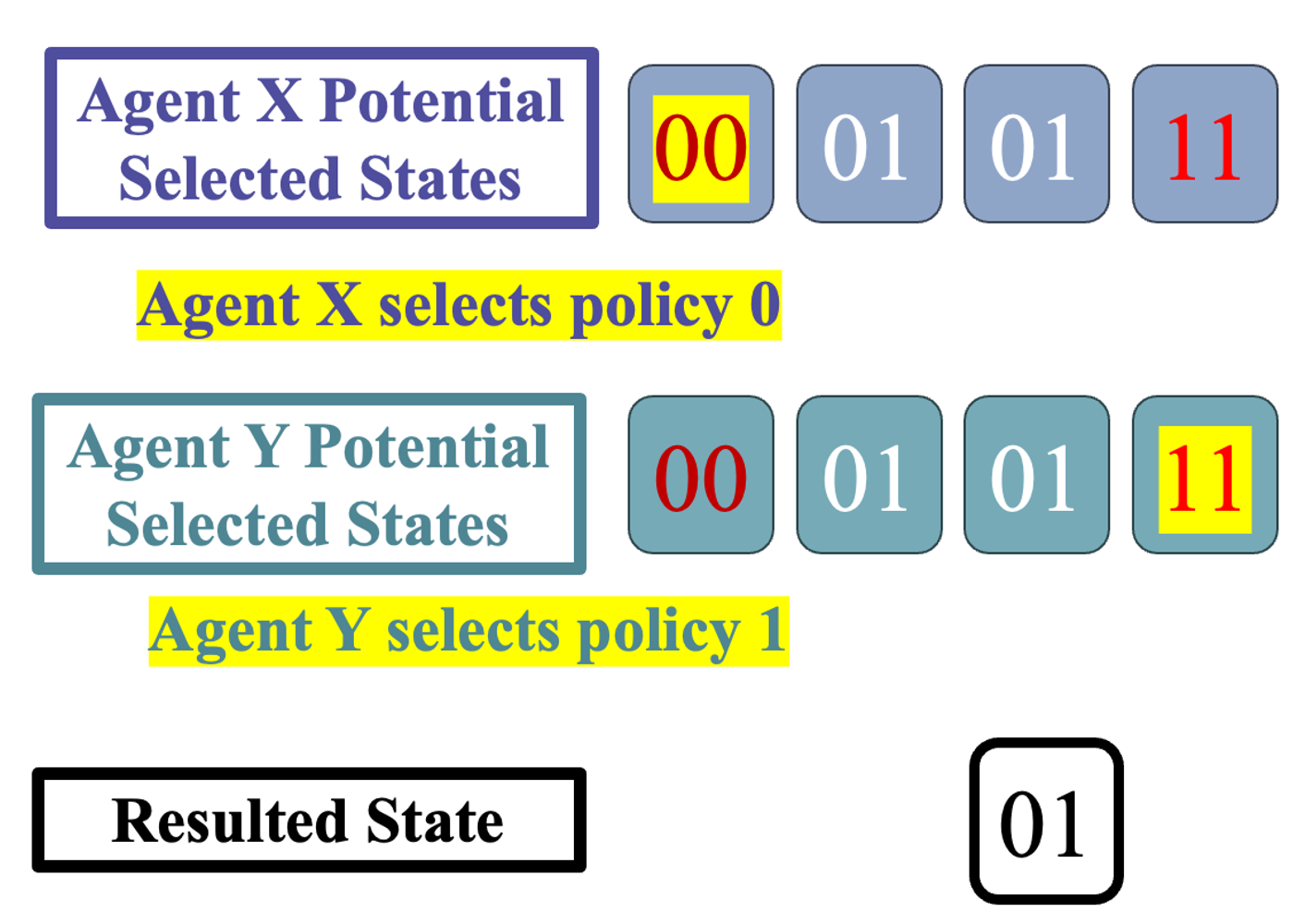}
                \includegraphics[width=0.48\columnwidth]{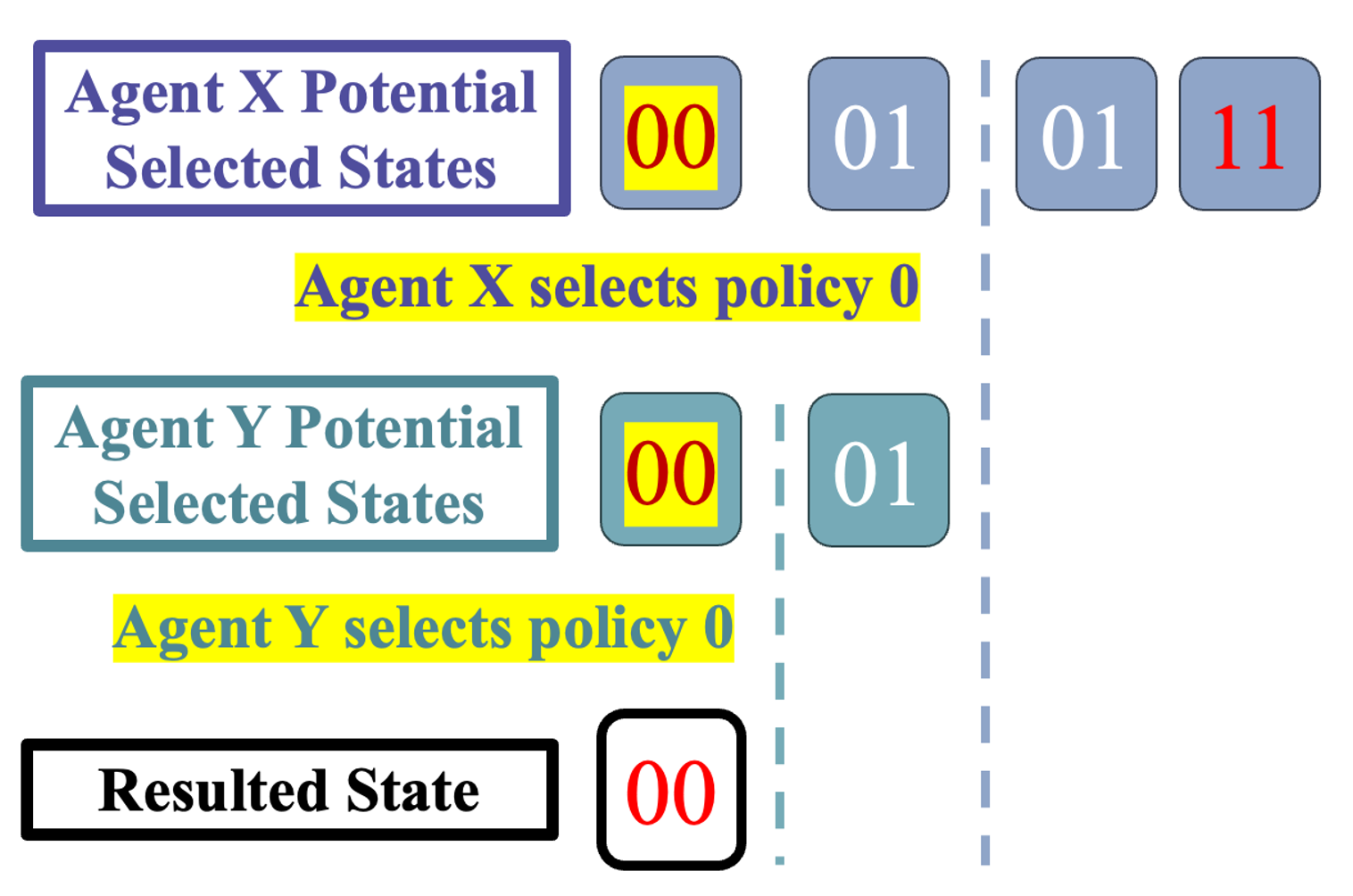}
                \caption{
                    Problem Definition: This is a 2-agent game featuring four potential destination states, each represented by two numbers denoting Agent X's and Agent Y's policies. \textbf{In this game, both agents can identify optimal states, indicated in red text, but they make their decisions independently without knowledge of each other's choices}.
                    \textbf{\textsl{SMA-Scenario(Left):}} Both agents observe all potential states, including two optimal states. Consequently, they might establish divergent objectives, which prevents them from reaching either of the two optimal states. \textbf{\textsl{ROMA-Scenario(Right):}} Agent X observes all potential states and selects its policy accordingly. Agent Y, on the other hand, observes only the states that Agent X's selection can lead to. Consequently, Agent X's selection influences Agent Y's choice, enabling it to align with Agent X's objective and ultimately reach the optimal state.}
                \label{figurelabel}
            \end{figure}
            We design an interaction scheme, round-robin multiagent interaction (ROMA), for agents to collect limited, but sufficient information that facilitates them in aligning their objectives with others so as to cooperate in learning the joint optimal. In ROMA, in each iteration, only one agent can collect experience. We refer to that single agent as the collector. We let a certain agent, $c$, be the only collector for continuous $t_u$ iterations. After agent $c$'s collection period, we let its next agent, which has an index $(c+1)\mod |K|$, be the only collector at the next continuous $t_u$ iterations. We enable agents to repeat the rotation process. 

            During the iterations, where agent $c$ is the only collector, agents with an index, smaller than $c$ are seniors, and agents with an index, greater than $c$, are juniors. Moreover, seniors can only execute their learned policy to interact with others. All other agents, including all juniors and the collector, execute their $\epsilon$-greedy policies, which allows them to take random actions. Consequently, the collector, agent $c$ observes experience, caused and limited by the joint of the learned policies of seniors. In other words, the collector cannot observe the transition conditioned on the joint actions, which its seniors do not take at all. 

            \begin{algorithm}\label{ROMA}
            	\caption{Round-Robin Multiagent Interaction } 
            	\begin{algorithmic}[1]
            		\For {$t=1,2,\ldots,t_{max}$}
                        \State $c = \lceil \frac{t}{t_u} \rceil \mod |K|$  
                        \State $collector$ = agent $c$
                        \State $seniors\ Z_c$ = agents with an index $<c$
                        \State $juniors\ J_c$ = agents with an index $>c$
                        \State 
                            $\forall z\in Z_c$ execute their learned policies, $\forall j\in J_c$ and agent $c$ execute their $\epsilon$-greedy policies
                        \State $collector$ records the interaction to replay buffer
            		\EndFor
            	\end{algorithmic} 
            \end{algorithm}
            
            The observation for a collector is limited to the transitions caused by the joint policy of its seniors. Therefore, if seniors have coordinated in learning a joint optimal, the limited observation is still sufficient to enable the collector to observe the optimal transitions so as to learn an optimal policy. Moreover, the limited observation, caused by seniors, also facilitates the collector to align its objectives with seniors. Consequently, following ROMA, the collector can coordinate with its sensors to learn a joint optimal policy even if multiple optimal joint policies exist. 
            
            As a result, ROMA enables agents to reach optimal coordination. We place the proof of the theorem in the appendix, which can be found on ArXiv.
            \begin{theorem} \label{ROMA-Optimal}
                If all agents follow the ROMA interaction process to collect experiences, the joint of independent QSS learning agents' individual converged strategies, $\pi^{ss*}_k$, is optimal.
            \end{theorem}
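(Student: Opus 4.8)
The plan is to induct on the agent index under the round-robin ordering, showing that once seniors have converged to a coherent optimal joint policy, the collector, using iQSS under ROMA, converges to an individual policy that extends that joint policy to an optimal one. First I would set up the base case: agent $1$ has no seniors, so during its collection windows every other agent plays an $\epsilon$-greedy policy with $\epsilon>0$. Hence $P(a_{-1}\mid s,a_1)>0$ for all $a_{-1}$, so by Lemma \ref{equality-prob} the relation assumption holds, by Lemma \ref{the set equivalence} the set equivalence holds, and by Theorem \ref{pi_ss_k is optimal} the converged $\pi^{ss*}_1$ is optimal in the sense that $a_1^*=\pi^{ss*}_1(s)$ attains $\max_{a\in A} Q^*(s,T(s,a))$ for every $s$. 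The key point I would make explicit here is that the state-state value $Q^*(s,s')$ learned by \cref{TD-ind-QSS} is the \emph{same} unique function for every agent (it depends only on $S$, $N(\cdot)$, $r$, $\gamma$), so "optimal" means optimal with respect to one common objective; this is what lets separately-trained agents be stitched together.

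The inductive step is the substantive part. Suppose agents $1,\dots,c-1$ have converged and their joint policy $(\pi^{ss*}_1,\dots,\pi^{ss*}_{c-1})$ is "jointly optimal along its own trajectory," meaning that from any state $s$ reachable under it, the partial joint action they select can be completed to a joint action achieving $\max_{a} Q^*(s,T(s,a))$. During agent $c$'s collection windows, seniors play exactly these fixed deterministic policies while $c$ and all juniors play $\epsilon$-greedy; so from agent $c$'s viewpoint the seniors' action is a fixed deterministic function of $s$, and the juniors' actions take every value with positive probability. Therefore the transitions agent $c$ observes are precisely $\{T(s,a_c,\sigma_{Z_c}(s),a_{J_c}): a_{J_c}\in A_{J_c}\}$, each with positive probability. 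I would re-run the argument of Theorem \ref{pi_ss_k is optimal} but with $A_{-c}$ replaced by $A_{J_c}$ and the seniors' contribution frozen: the set equivalence and its supporting lemmas go through verbatim over the restricted action set (positivity of $P(a_{J_c}\mid s,a_c)$ from $\epsilon$-greedy), giving that agent $c$'s converged $Q^{ssa*}_c(s,a_c)=\max_{a_{J_c}} Q^*(s,T(s,a_c,\sigma_{Z_c}(s),a_{J_c}))$ and hence $\pi^{ss*}_c(s)$ maximizes this over $a_c$. Then I would argue that because the seniors' frozen partial action is (by the inductive hypothesis) completable to the global optimum, the maximum over $(a_c,a_{J_c})$ with seniors fixed equals $\max_{a\in A} Q^*(s,T(s,a))$; so agent $c$'s choice, together with the seniors', is still completable to the global optimum, which is exactly the inductive hypothesis advanced to index $c$.

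Finishing: after one full rotation (and in the limit, since the round-robin repeats infinitely often, each agent gets infinitely many collection windows so the usual stochastic-approximation convergence of \cref{TD-ind-QSS} applies) all $|K|$ agents satisfy the hypothesis simultaneously, and for $c=|K|$ there are no juniors, so the "completable partial action" is a full joint action achieving $\max_{a} Q^*(s,T(s,a)) = Q^*(s,T(s,\pi^*(s)))$ at every reachable $s$; by the cQSS equivalence $Q^{ss*}(s,a)=Q^*(s,T(s,a))$ this joint action is $\pi^*(s)$, so the joint policy is optimal. The main obstacle I anticipate is making the "completable to the global optimum along the realized trajectory" invariant precise and stable under the interleaving of learning and rotation: one must ensure that while agent $c$ is still learning, the seniors really are holding a fixed converged policy (so that $c$'s observed transition is stationary and the iQSS convergence theorem legitimately applies), and that the reachable-state qualifier is handled consistently — agent $c$ only needs optimal behavior on states reachable under the seniors' optimal play, which is exactly the support of what it observes, so the restriction is harmless. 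A secondary subtlety is tie-breaking: when multiple optimal joint actions exist, $\argmax$ must be resolved consistently, but ROMA's design — seniors commit first and restrict the collector's observed state set — is precisely what removes the ambiguity, and I would phrase the induction so that each agent's $\argmax$ is taken within the senior-restricted neighborhood $N(s,a_c)$, which already encodes the seniors' commitment.
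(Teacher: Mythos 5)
Your proposal is correct and follows essentially the same route as the paper's appendix proof: induction over the round-robin ordering, with the base case obtained from the $\epsilon$-greedy positivity via Lemma \ref{equality-prob}, Lemma \ref{the set equivalence}, and Theorem \ref{pi_ss_k is optimal}, and the inductive step using the seniors' frozen converged policies to restrict the collector's observed neighboring states without excluding the optimal transition. Your explicit ``completable to the global optimum'' invariant and the stationarity caveat you flag are precisely what the paper handles informally by assuming single-round ROMA (seniors finish learning and hold their policies fixed, with the $\delta$-thresholded $N^R(s,a_k)$ for the multi-round case), so no substantive difference in approach remains.
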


    \section{A Practical iQSS Learning Process}
        During the interaction, each agent collects its experience into its replay buffer. From the replay buffer, each agent $k\in K$ can observe multiple experienced tuples, each of which is represented as $(s,a_k,s',r)$ that are the subsequent state $s'$ and the received reward $r$ after taking the action $a_k$ at the state $s$. With the data, each agent can learn its policy by modeling the transition $P(s'|s,a_k)$ and the neighboring states $N(s,a_k)$, identifying the optimal states, optimizing state-based value network $Q(s,s')$, and finally inducing the state-action value function $Q^{ssa}(s,a_k)$, and the policy $\pi^{ss}(s)$.

        Each agent $k\in K$ first learns the observed transition, $P(s'|s,a_k)$, by modeling it using a neural network, represented by parameters $\theta$ and maximizing the likelihood of the observed data point. 
        \begin{align*}
            P(s'|s,a_k) \approx P(s'|s,a_k, \theta)
        \end{align*}
         An agent can use the estimated likelihood to infer the neighboring states. If the likelihood $P(s'|s,a_k,\theta)$ is greater than $\delta$, that means that an agent $k\in K$ observes a subsequent state $s'$ with a probability higher than $\delta$ when it takes action $a_k$ at the state $s$. Therefore, it can consider the subsequent state $s'$ as a possible neighboring state, $s' \in N^{\delta}(s,a_k,\theta)$. Additionally, when $\delta$ is zero, an agent considers all possible states it has observed from the collected data. It is fine if other agents do not change their policy at all. However, if others change their policies, an agent might consider states, which were possible but not going to be reached. Therefore, each agent should set $\delta$ to be close to, but greater than zero. As a result, it would only consider states that are still possible to reach with a probability higher than $\delta$, greater than zero, and thus, would not consider those that will not be reached. 
        \begin{align*}
            N^{\delta}(s,a_k,\theta) \approx \{s'|P(s'|s,a_k,\theta) > \delta\}
        \end{align*}

        Each agent also needs to estimate the state-based value $Q(s,s')$ through the temporal difference learning using \cref{TD-ind-QSS}. When applying it, agents must maximize the state-based value over all possible neighboring states of the state. However, the space of neighboring states is usually large. Therefore, an efficient way to identify optimal neighboring states is desirable. 
        \begin{equation}\label{max_qks}
            \begin{aligned}
                &\max_{s'\in N(s)}Q_k(s, s' ) \\
                &= \max_{a_k\in A_k}\max_{s' \in N(s,a_k)}Q_k(s, s' ) \\
                &= \max_{a_k\in A_k}Q_k(s', \hat{z}_{s,a_k} ) 
            \end{aligned}
        \end{equation}
    
        We can rewrite the maximization using individual neighboring states $N(s,a_k)$ as the second line of the \cref{max_qks}, which maximizes over an individual action space and also the individual neighboring state. The maximization over individual neighboring states $N(s,a_k)$ is computing expensive. Therefore, we develop a method to compute the optimal individual neighboring state $\hat{z}_{s,a_k}$. Consequently, we can compute the maximization by only maximizing over an individual action space.
        \begin{align*}
            \hat{z}_{s,a_k} & = \argmax_{s' \in N_k(s,a_k,\theta)}Q_k(s,s')
        \end{align*}
        The optimal individual neighboring state $\hat{z}_{s,a_k}$ is the state, which provides the max state-based value among all individual neighboring states $N(s, a_k)$. We enable agents to learn $\hat{z}_{s,a_k}$ by an iterative update method. We first assign an agent a random state as $\hat{z}_{s,a_k}$. However, if $P(\hat{z}_{s,a_k}|s,a_k,\theta)$ is smaller than $\delta$, that means $\hat{z}_{s,a_k}$ does not belong to $N(s, a_k)$. Therefore, we must update it by a neighboring state $s'$, which belongs to $N(s, a_k)$, where $P(\hat{s'}_{s,a_k}|s,a_k,\theta)>\delta$. Additionally, if it belongs to $N(s, a_k)$, but offers a worse performance than the other individual neighboring state. Using the better state, we must update $\hat{z}_{s,a_k}$. These two basic principles enable agents to improve their understanding of $\hat{z}_{s,a_k}$ and ultimately learn the optimal neighboring state:
        \begin{align*}
            \hat{z}_{s,a_k} \leftarrow  \begin{cases}
                                            s',    &if\ P(\hat{z}_{s,a_k}|s,a_k,\theta) \leq \delta \\
                                                                & \ and\  P(s'|s,a_k,\theta) > \delta \\
                                            s',    &if\ Q_k(s,\hat{z}_{s,a_k}) < Q_k(s,s') \\
                                                                & \ and\  P(s'|s,a_k,\theta) > \delta \\
                                            \hat{z}_{s,a_k},                 & else
                                        \end{cases}
        \end{align*}
    
        With the optimal neighboring state, we can develop an efficient temporal difference learning method to update the state-based value $Q_k(s, s)$ by replacing the maximization with \cref{max_qks}:
        \begin{equation}\label{Qks-z}
            \begin{aligned}
                Q_k(s, s') \leftarrow
                        &(1-\alpha)Q_k(s,s') + \\
                        &\alpha(r + \gamma \max_{a_k\in A_k}Q_k(s', \hat{z}_{s',a_k} ) )
            \end{aligned}
        \end{equation}

        The temporal difference learning in \cref{Qks-z} enables each agent to update the value $Q^k(s,s')$. In addition, during each iteration, they can use the states-based value $Q(s,s')$ to update the state-action value $Q_k^{ssa}(s, a_k)$ and also further compute the policy $\pi^{ss}_k(s)$:
        \begin{align*}
            Q_k^{ssa}(s, a_k) \leftarrow \max_{s' \in N(s,a_k)}Q_k(s,s')
        \end{align*}
        \begin{align*}
            \pi^{ss}_k(s) \leftarrow \argmax_{a_k \in A_k}Q^{ssa}_k(s, a_k)        
        \end{align*}
        As a result, given enough time each agent learns optimal values and the optimal policy. If an overview of practical iQSS learning is desired, we provide the pseudocode in the appendix, which can be found on ArXiv.
        
    \section{Numerical Experiments}
    We evaluate our algorithms via a coordination challenge that involves independent agents. Every agent begins with knowledge of various strategies but does not know which strategy offers the most group benefit and what strategies other agents are considering. The goal is for agents to tackle two challenges: to independently identify the best strategies and to align their efforts with other agents.

    The issue is divided into three phases, where agents must address two challenges in each to increase collective earnings. Not pinpointing the best strategies leads to low rewards, ranging from $0$ to $10$. Conversely, agents targeting divergent optimal states face hefty fines of $-200$. Achieving success hinges on all agents aligning their goals and collaborating towards common objectives, which can yield significant gains of $100$.

    Agents independently choose from three actions at each stage, forming a joint strategy. Notably, only two specific joint strategies yield successful coordination, directing agents toward one of the two optimal states. Furthermore, successful coordination hinges on all agents converging to pursue the same optimal objective; otherwise, effective coordination is unlikely.

    We simulate coordination among three, five, and seven agents. For three agents, the task is to choose the best state out of 65 states. For five and seven agents, the challenge increases with 100 states. Our investigation compares our novel state-based Q learning technique (iQSS) against the established state-based Q learning method (I2Q) and the conventional action-based Q learning (independent Q Learning). iQSS and I2Q both rely on state values for making decisions but differ in their learning approaches and strategies for selecting the optimal state.

    We assess two interaction strategies: SMA (standard) and ROMA (innovative), applying them across all examined learning approaches. This results in six distinct evaluative comparisons: ROMA-iQSS, SMA-iQSS, ROMA-I2Q, SMA-I2Q, and the independent Q learning techniques including ROMA-indQ and SMA-indQ. We carry out fifteen simulations for each method across scenarios involving $3$, $5$, and $7$ agents to analyze the average group reward and performance indicators. Furthermore, we choose an $\epsilon$-greedy policy setting of $\epsilon=0.8$ to encourage comprehensive exploration. This high $\epsilon$ value is crucial for simulating the objective alignment challenge, ensuring that the majority of agents experience both optimal states. Additionally, adopting a higher $\epsilon$ value helps in minimizing the likelihood of agents becoming entrapped in local optima. Moreover, we show more experiment setup details in the appendix, which can be found on ArXiv.

    The analysis reveals that ROMA significantly diminishes performance variance compared to SMA algorithms. Specifically, in scenarios with three and five agents, the performance of SMA-iQSS demonstrates considerable fluctuations, whereas ROMA-iQSS stabilizes quickly, achieving consistent and superior performance after a few iterations. Furthermore, both ROMA-indQ and ROMA-I2Q exhibit marginally better and notably steadier performance than their SMA equivalents. This reduction in variance can be attributed to the more stationary environment created by the ROMA structure. In SMA settings, all agents are concurrently exploring and making decisions to achieve their goals, leading to challenges in objective alignment when goals differ, thereby producing a non-stationary environment. Conversely, ROMA facilitates a more stable environment by enabling agents to naturally follow the lead of more experienced or "senior" agents in decision-making processes, thereby streamlining objective alignment without the need for explicit discussions.
    
    The plots demonstrate that iQSS, particularly ROMA-iQSS, outperforms I2Q and independent Q learning in efficiently finding optimal states. ROMA-iQSS consistently achieves superior results, indicating its effectiveness in identifying the best state. Although SMA-iQSS does not always deliver top average performance, it reaches significantly higher peaks in some iterations, showing it can find the optimal state despite challenges in aligning objectives. In contrast, I2Q and independent Q learning typically do not attain the significant rewards, emphasizing the advantage of iQSS in state optimization. This advantage stems from I2Q's differing method of state estimation and the limitations of action-based techniques like independent Q learning in achieving similar outcomes.

    \begin{figure}
        \centering        
        \includegraphics[width=0.635\columnwidth]{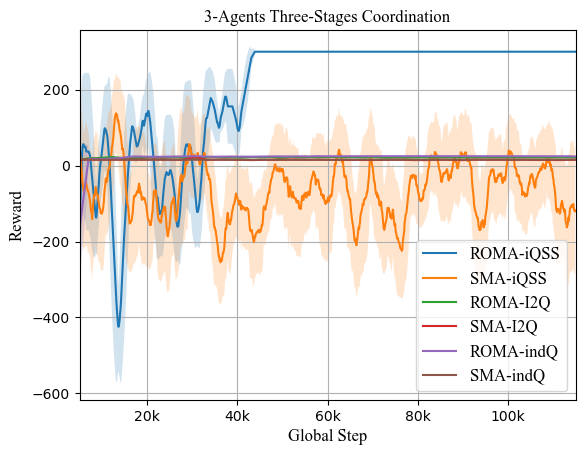}
        \includegraphics[width=0.67\columnwidth]{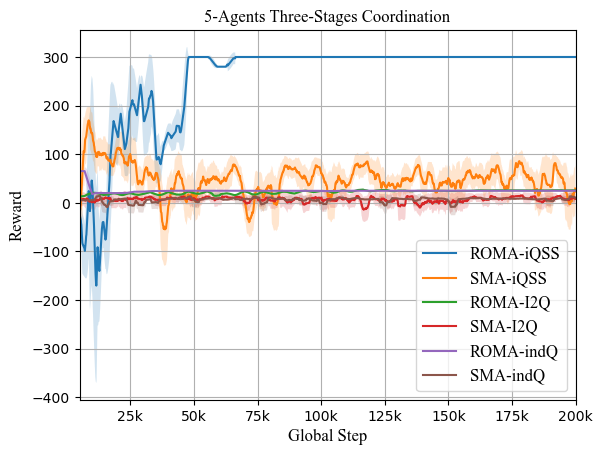}
        \includegraphics[width=0.67\columnwidth]{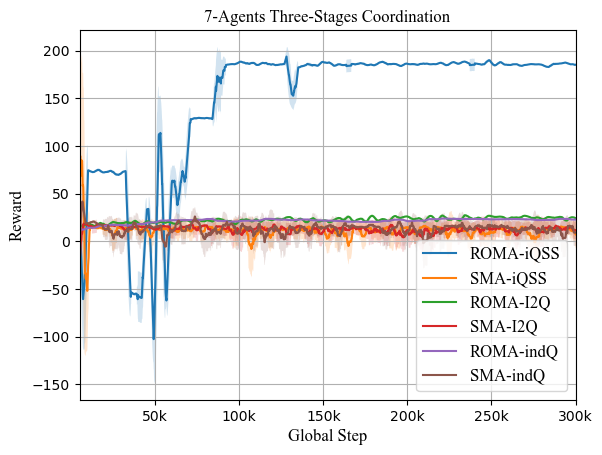}
        \caption{Teams of 3, 5, and 7 agents navigate three-stage coordination, targeting top outcomes in an environment with multiple optimal strategies.}
    \end{figure}
    
    \section{Conclusion and Discussion}
        In this work, we present a novel decentralized framework decentralized framework aimed at overcoming the challenge of forming an optimal collective strategy without direct communication. To address this challenge, we develop two critical components. Firstly, we develop an independent states-based learning method, iQSS, designed to facilitate each agent to efficiently identify optimal states and individual policies. Secondly, we introduce an interaction scheme, ROMA, which empowers agents to collect limited yet sufficient data. This data allows each agent to implicitly guide less experienced peers in their learning journey while simultaneously benefiting from the guidance of more senior agents during their own learning process. As a result, ROMA plays a pivotal role in aligning the objectives of all agents, ultimately leading to successful coordination.
    
        We substantiate our claims with rigorous proof, affirming that iQSS is capable of learning individual optimal policies, while also demonstrating that ROMA fosters objective alignment among agents. Furthermore, we provide empirical evidence through multi-agent coordination simulations, showcasing our method's superiority over the current state-of-the-art state-based value learning approach, I2Q, and the classical action-based, independent Q learning method.
    
        For future work, we aim to extend our multi-agent coordination methods to real-world applications, such as coordinating robots in warehouses or ensuring safe autonomous driving for automobiles. Additionally, we plan to incorporate human-like features into our agents. While human strategies are often inaccessible to robots, our framework equips agents to coordinate effectively with others even without insights into their strategies. Consequently, we are optimistic that our approach holds significant potential to enhance human-robot coordination.



    \bibliographystyle{IEEEtran}
    \bibliography{Ref}

\begin{thebibliography}{10}
\providecommand{\url}[1]{#1}
\csname url@samestyle\endcsname
\providecommand{\newblock}{\relax}
\providecommand{\bibinfo}[2]{#2}
\providecommand{\BIBentrySTDinterwordspacing}{\spaceskip=0pt\relax}
\providecommand{\BIBentryALTinterwordstretchfactor}{4}
\providecommand{\BIBentryALTinterwordspacing}{\spaceskip=\fontdimen2\font plus
\BIBentryALTinterwordstretchfactor\fontdimen3\font minus \fontdimen4\font\relax}
\providecommand{\BIBforeignlanguage}[2]{{%
\expandafter\ifx\csname l@#1\endcsname\relax
\typeout{** WARNING: IEEEtran.bst: No hyphenation pattern has been}%
\typeout{** loaded for the language `#1'. Using the pattern for}%
\typeout{** the default language instead.}%
\else
\language=\csname l@#1\endcsname
\fi
#2}}
\providecommand{\BIBdecl}{\relax}
\BIBdecl

\bibitem{papoudakis2021benchmarking}
\BIBentryALTinterwordspacing
G.~Papoudakis, F.~Christianos, L.~Schäfer, and S.~V. Albrecht, ``Benchmarking {M}ulti-{A}gent {D}eep {R}einforcement {L}earning {A}lgorithms in {C}ooperative {T}asks,'' in \emph{Proceedings of the Neural Information Processing Systems Track on Datasets and Benchmarks (NeurIPS)}, 2021. [Online]. Available: \url{http://arxiv.org/abs/2006.07869}
\BIBentrySTDinterwordspacing

\bibitem{mrc}
J.~J. Koh, G.~Ding, C.~Heckman, L.~Chen, and A.~Roncone, ``{C}ooperative {C}ontrol of {M}obile {R}obots with {Stackelberg} {L}earning,'' in \emph{Proceedings of the 2020 IEEE/RSJ International Conference on Intelligent Robots and Systems (IROS)}, 2020, pp. 7985--7992.

\bibitem{tsc}
X.~Wang, L.~Ke, Z.~Qiao, and X.~Chai, ``Large-scale {T}raffic {S}ignal {C}ontrol {U}sing a {N}ovel {M}ultiagent {R}einforcement {L}earning,'' \emph{IEEE Transactions on Cybernetics}, vol.~51, no.~1, pp. 174--187, 2021.

\bibitem{Wei_2019}
H.~Wei, N.~Xu, H.~Zhang, G.~Zheng, X.~Zang, C.~Chen, W.~Zhang, Y.~Zhu, K.~Xu, and Z.~Li, ``{CoLight}: {Learning Network-Level Cooperation for Traffic Signal Control},'' in \emph{Proceedings of the 28th {ACM} International Conference on Information and Knowledge Management}.\hskip 1em plus 0.5em minus 0.4em\relax {ACM}, Nov 2019.

\bibitem{brawer2023interactive}
J.~Brawer, D.~Ghose, K.~Candon, M.~Qin, A.~Roncone, M.~V{\'a}zquez, and B.~Scassellati, ``{Interactive Policy Shaping for Human-Robot Collaboration with Transparent Matrix Overlays},'' in \emph{Proceedings of the 2023 ACM/IEEE International Conference on Human-Robot Interaction}, 2023, pp. 525--533.

\bibitem{tung2024workspaceopt}
\BIBentryALTinterwordspacing
Y.-S. Tung, M.~B. Luebbers, A.~Roncone, and B.~Hayes, ``{Workspace Optimization Techniques to Improve Prediction of Human Motion During Human-Robot Collaboration},'' in \emph{In Proceedings of the 2024 ACM/IEEE International Conference on Human-Robot Interaction (HRI'24)}, ACM/IEEE.\hskip 1em plus 0.5em minus 0.4em\relax New York, NY, USA: ACM, Mar 2024. [Online]. Available: \url{https://doi.org/10.1145/3610977.3635003}
\BIBentrySTDinterwordspacing

\bibitem{son2019qtran}
K.~Son, D.~Kim, W.~J. Kang, D.~E. Hostallero, and Y.~Yi, ``{QTRAN: Learning to factorize with transformation for cooperative multi-agent reinforcement learning},'' in \emph{International conference on machine learning}.\hskip 1em plus 0.5em minus 0.4em\relax PMLR, 2019, pp. 5887--5896.

\bibitem{gupta2017cooperative}
J.~K. Gupta, M.~Egorov, and M.~Kochenderfer, ``Cooperative {M}ulti-{A}gent {C}ontrol using {D}eep {R}einforcement {L}earning,'' in \emph{Autonomous Agents and Multiagent Systems: AAMAS 2017 Workshops, Best Papers, Revised Selected Papers 16}.\hskip 1em plus 0.5em minus 0.4em\relax Springer, 2017, pp. 66--83.

\bibitem{lowe2020multiagent}
R.~Lowe, Y.~Wu, A.~Tamar, J.~Harb, P.~Abbeel, and I.~Mordatch, ``{Multi-Agent Actor-Critic for Mixed Cooperative-Competitive Environments},'' in \emph{Advances in Neural Information Processing Systems}, 2017.

\bibitem{iqbal2019actorattentioncritic}
S.~Iqbal and F.~Sha, ``{Actor-Attention-Critic for Multi-Agent Reinforcement Learning},'' in \emph{Proceedings of the International Conference on Machine Learning}, 2019.

\bibitem{ming1993multi}
M.~Tan, ``{Multi-agent Reinforcement Learning: Independent vs. cooperative Agents},'' in \emph{Proceedings of the tenth international conference on machine learning}, 1993, pp. 330--337.

\bibitem{dewitt2020independent}
C.~S. de~Witt, T.~Gupta, D.~Makoviichuk, V.~Makoviychuk, P.~H.~S. Torr, M.~Sun, and S.~Whiteson, ``{Is Independent Learning All You Need in the StarCraft Multi-Agent Challenge?}'' 2020.

\bibitem{palmer2018lenient}
G.~Palmer, K.~Tuyls, D.~Bloembergen, and R.~Savani, ``{Lenient Multi-Agent Deep Reinforcement Learning},'' in \emph{Proceedings of the 17th International Conference on Autonomous Agents and Multiagent Systems (AAMAS)}, 2018.

\bibitem{4399095}
L.~Matignon, G.~J. Laurent, and N.~Le~Fort-Piat, ``{Hysteretic {Q}-learning: An algorithm for Decentralized Reinforcement Learning in Cooperative Multi-Agent Teams},'' in \emph{Proceedings of the 2007 IEEE/RSJ International Conference on Intelligent Robots and Systems}, 2007.

\bibitem{I2Q}
J.~Jiang and Z.~Lu, ``{I2Q}: {A Fully Decentralized {Q}-Learning Algorithm},'' \emph{Thirty-Sixth Annual Conference on Neural Information Processing Systems}, 2022.

\bibitem{tan1997multi}
M.~Tan, ``{Multi-Agent Reinforcement Learning: Independent vs. Cooperative Learning},'' \emph{Readings in Agents}, pp. 487--494, 1997.

\bibitem{watkins1992q}
C.~J. Watkins and P.~Dayan, ``Q-learning,'' \emph{Machine Learning}, vol.~8, pp. 279--292, 1992.

\bibitem{QSS}
A.~Edwards, H.~Sahni, R.~Liu, J.~Hung, A.~Jain, R.~Wang, A.~Ecoffet, T.~Miconi, C.~Isbell, and J.~Yosinski, ``Estimating {Q(s,s’)} with {Deep Deterministic Dynamics Gradients},'' in \emph{Proceedings of the 37th International Conference on Machine Learning}, 2020, pp. 2825--2835.

\end{thebibliography}

    \appendix[The proof of Theorem \ref{ROMA-Optimal}]
        
        In Theorem \ref{ROMA-Optimal}, we state:
        If all agents follow the ROMA interaction process to collect experiences, the joint of all agents' individual converged strategies, $\pi^{ss*}_k$, is optimal.
        
        \begin{proof}
            For the readability of the proof, we suppose all agents follow single-round ROMA, where all agents start their rounds to collect experience only if their seniors complete their learning. We can set $t_u = \frac{t}{|K|}$ to satisfy the assumption. Additionally, we also suppose that they stop learning after stop collecting experience. 
            
            After proving the property is true under this assumption, we will consider it in the general multi-round ROMA interaction.

            We want to show that, for all $d\in K$, the joint policy of the first $d$ agents is optimal if all agents follow single-round ROMA to collect experience for independent QSS learning. We show it by induction.

            \textbf{Base Case, d=1:} 
            Given enough time, agent $1$ observes all possible states. We can show it by showing the set of agent $1$;s observed neighboring states including all possible neighboring states:
            \begin{align*}
                & \bigcup_{a_1\in A_1} N(s,a_1)   \\
                & = \bigcup_{a_1\in A_1} \{s'| P(s,a_1,a_{-1})> 0, \forall a_{-1}\in A_{-1}\} \\
                & = \{s'| P(s,a_1,a_{-1})>0, \forall a_{1}\in A_{1}, \forall a_{-1}\in A_{-1}\} \\
                & = \{s'| P(s,a)>0, \forall a \in A\}
            \end{align*}
            The first equivalence is because all agents take random actions with a probability higher than $0$ during agent $1$'s collection turn. The second and the third equivalence comes from the rule of union operations on sets.  
            Given enough time, independent QSS learning enables agent $1$ to learn an individual optimal policy based on the observation of all possible neighboring states.

            \textbf{Base Case, d=2: } 
            Agent $2$ observes the neighboring states:
            \begin{align*}
                \bigcup_{a_2\in A_2} N(s, &a_2)   \\
                = \bigcup_{a_2\in A_2} \{s'|  & P(s, \pi^{ss*}_{<2}(s), a_2, a_{>2})>0, \forall a_{>2}\in A_{>2}\} \\
                = \{s'| &P(s, \pi^{ss*}_1(s), a_2, a_{>2})>0, \forall a_{>2}\in A_{>2})>0,\\
                        & \forall a_{2}\in A_{2}, \forall a_{>2}\in A_{>2}\} \\
                = \{s'| &P(s, \pi^{ss*}_1(s), a_{-1})>0, \forall a_{-1} \in A_{-1}\}
            \end{align*}
            
            The first equivalence is because all agents except agent $1$ take random actions with a probability higher than $0$ and agent $1$ takes actions, suggested by its optimal policy, during agent $2$'s collection turn. The second and the third equivalence comes from the rule of union operations.

            Consequently, agent $2$ observes only the states, which agent $1$'s converged policy $\pi^{ss*}_1$ could cause. However, since its observation depends on agent $1$'s converged policy, which is optimal, agent $2$ can still observe the optimal state transition. As a result, through independent QSS learning, agent $2$ learns its optimal. Moreover, the joint of agent $1$ and agent $2$'s learned policies is also optimal.

            \textbf{Inductive Step: } 
            For some $c\geq2$, we assume that the seniors of agent $c$ have aligned their objectives and coordinated in learning their joint optimal policies. Agent $c$ then observes the states:
             \begin{align*}
                \bigcup_{a_c\in A_c} N(s, &a_c)   \\
                = \bigcup_{a_c\in A_c} \{s'|  & P(s, \pi^{ss*}_{<c}(s), a_c, a_{>c})>0, \forall a_{>c}\in A_{>c}\} \\
                = \{s'| &P(s, \pi^{ss*}_{<c}(s), a_c, a_{>c})>0, \forall a_{>c}\in A_{>c})>0,\\
                        & \forall a_{<c}\in A_{<c}, \forall a_{\geq c}\in A_{\geq c}\} \\
                = \{s'| &P(s, \pi^{ss*}_{<c}(s), a_{-c})>0, \forall a_{-c} \in A_{-c}\}
            \end{align*}
            From the assumption, during agent $c$'s collection turn, its senior execute their joint optimal policies. In addition, all junior agents execute random actions with a probability greater than $0$.
            Therefore, the first equivalence holds. The second and the third equivalence comes from the rule of the union operation on sets.
            
            Agent $c$ observes only the states, its senior's joint policy, $\pi^{ss*}_{<c}$ could cause. However, since its seniors' joint policy is optimal, agent $c$ can still observe the optimal state transition. As a result, agent $c$ learns its optimal through independent QSS learning. Moreover, the joint of agent $c$ and its seniors' learned policies is also optimal.

            \textbf{Results of the Induction:} 
            Our proof shows the statement is true. It implies that the joint policy of all agents is optimal if all agents follow single-round ROMA to collect experience for independent QSS learning. 

            \textbf{Discussion on the multi-round ROMA:}
            We can adapt the proof from single-round ROMA to multi-round ROMA by replacing the lower bound, $0$, of the individual transition by $\delta$, which is close to, but greater than $0$. In other words, the neighboring state observed by agent $k$ is $N^R(s,a_k)$ instead of $N(s,a_k)$:
            \begin{align*} \label{NR}
                N^R(s,a_k) = \{s'|P(s'|s,a_k) > \delta\}
            \end{align*}
            It is because, in multi-round ROMA, agents observe a transition, which is not stationary if its seniors change their policies during their learning. The real transition probability of a certain subsequent state may be large at the start but may become zero if senior agents change their policy to favor other states. In that scenario, the observed transition probability is non-stationary at the start but should decrease gradually to a value close to zero after seniors complete their learning. 

            To filter the choices, dropped by seniors, the collector should consider states with a transition probability higher than a certain small number, which is close to, but greater than zero. Consequently, the collector would only consider the states, its senior's current policy could lead to. As a result, we can show that the joint policy of it and its seniors is optimal following a proving process, similar to the proof in single-round ROMA.
        \end{proof}
        \appendix[Emprical Experiments Setting on ROMA: Early-Stopping and Pre-Collection Mechanisms]{
            Our empirical studies reveal that two mechanisms, Early-Stopping and Pre-Collection, show enhanced learning stationarity when the ROMA interaction scheme is applied. 
            
            \textbf{Early-Stopping: }
                In the Early-Stopping mechanism, more experienced agents conclude their learning process ahead of their junior counterparts. Specifically,we divide the total iteration count into $|K|$ partitions, with $|K|$ representing the number of agents. Each section consists of $t^*$ iterations. An agent indexed by $c$ discontinues its learning after reaching $c\ t^*$ iterations. 
                
                This mechanism enhances the learning environment's stability for less experienced agents because it relies on a straightforward principle: once senior agents cease their learning, they solidify their strategies.
                
            \textbf{Pre-Collection: }
                In the traditional ROMA framework, agents collect experiences exclusively during their own turns, leading to a scenario where each agent learns from a unique set of experiences. Such a method can render their learning less synergistic with their peers'. For example, if some agents mainly investigates the left side of the environment while the rest concentrate on the right, both groups acquire specific insights. However, this disparate learning fails to contribute to formulating an optimal joint strategy for either environment section. 

                To address these challenges, a simple adjustment can be made: allowing agents to collect experiences not just on their own turns, but also during the turn immediately preceding theirs. This change means that the experiences of their peers also get included in their experience replay buffer, encouraging agents to learn from the same experiences and align their learning with one another. Furthermore, implementation details on this are specified in Line 5, 9 and 13 of the pseudocode in the Algorithm titled \textit{ROMA with Early Stopping and Pre-Collection Mechanism}.

            \begin{algorithm}\label{ROMA ES and PC}
            	\caption{ROMA with Early Stopping and Pre-Collection Mechanism} 
            	\begin{algorithmic}[1]
                    \State $t^{*} = t_{max}/|K|$ 
                    \State $t_u = t^{*}/n_{rounds}$
            		\For {$t=1,2,\ldots,t_{max}$}
                        \State $c = \lceil \frac{t}{t_u} \rceil \mod |K|$      
                        \State $nc = (c+1) \mod |K|$
                        \State  \If{$t<c\ {t^{*}}$}     
                                    \State $collector$ = agent $c$
                                    \State $next-collector$ = agent $nc$
                                    \State $seniors\ Z_c$ = agents with an index $<c$
                                    \State $juniors\ J_c$ = agents with an index $>c$
                                    \State 
                                        $\forall z\in Z_c$ execute their learned policies, $\forall j\in J_c$ and agent $c$ execute their $\epsilon$-greedy policies.
                                    \State $collector$ and $next-collector$ record the interaction to replay buffer.
                                \EndIf
                                
            		\EndFor
            	\end{algorithmic} 
            \end{algorithm}
        }

        \appendix[An overview of A Practical iQSS Learning]
            Each iQSS agent carries out Practical iQSS Learning with its collected dataset $D_k$. Initially, the agent initializes its models, including value functions $Q_k$, $Q_k^{ssa}$, the policy $\pi^{ss}_k(s)$, along with the parameters for the state-likelihood model $\theta_k$, and the parameters for the best-state estimator $\omega_k$.

            During training iterations, agent $k$ selects a batch of experiences $(s,a_k,s',r)$, where $s$ represents the current states, $a_k$ the actions, $s'$ the subsequent states, and $r$ the rewards. With this data, agent $k$ updates $\theta_k$ using maximum likelihood estimation as indicated on Line 5. It also refines its best-state estimator in line with the update strategies discussed in "A Practical iQSS Learning Process", specifically concerning $\hat{z}_{s,a_k}$. Moreover, agent $k$ revises the state-based value function $Q_k$ through temporal difference learning. Following this, agent $k$ uses the derived state-based value to enhance the action-based value $Q_k^{ssa}$ and to further deduce the policy $\pi^{ss}_k(s)$.
                    
            \begin{algorithm}\label{iQSS}
                \caption{Agent $k$: iQSS with Optimal State Identification} 
                \begin{algorithmic}[1]
                    \State Initialize $Q_k$, $Q_k^{ssa}$, $\pi^{ss}_k(s)$, $\theta_k$, and $\omega_k$.
                    \State Initialize the experience buffer $D_k$
                    \For {$t=1,2,\ldots,t_{max}$}
                        \State $(s,a_k,s',r) \sim D_k$
                        \State $\theta_k \leftarrow \argmax_{\theta}  \log P(s'|s,a_k,{\theta})$ 
                        \State $z = \hat{z}_{s,a_k}$
                        \If{ $P(z|s,a_k,{\theta})\leq\delta$ or $Q(s,z)\leq Q(s,s')$}
                            \If{ $P(s'|s,a_k,{\theta})>\delta$}
                                \State $\hat{z}_{s,a_k} \leftarrow s'$
                            \EndIf
                        \EndIf
                                \begin{align*}
                                    Q_k(s, s) \leftarrow
                                            &(1-\alpha)Q_k(s,s) + \\
                                            &\alpha(r + \gamma \max_{a_k\in A_k}Q_k(s', \hat{z}_{s,a_k} ) )
                                \end{align*}
                        \State $Q_k^{ssa}(s, a_k) \leftarrow \max_{s' \in N(s,a_k)}Q_k(s,s')$
                        \State $\pi^{ss}_k(s) \leftarrow \argmax_{a_k \in A_k}Q^{ssa}_k(s, a_k)$        
                    \EndFor
                \end{algorithmic} 
            \end{algorithm}     
    
    



    \addtolength{\textheight}{-12cm}   
    
\end{document}